\documentclass{llncs}

\usepackage{amsmath,amssymb,amsfonts}
\usepackage[dvipdfmx]{graphicx}
\usepackage{listings}
\usepackage{xcolor}
\usepackage{url}
\usepackage[caption=false]{subfig}
\usepackage{multirow}
\usepackage[linesnumbered,ruled,vlined]{algorithm2e}
\usepackage{mathtools}
\usepackage{wrapfig}
\usepackage{booktabs}
\usepackage{here}
\usepackage{mdframed}

\usepackage{article}


\mdfsetup{ %
    skipabove=1em, skipbelow=.5em, %
    innertopmargin=.5em, innerbottommargin=.5em, %
    innerleftmargin=.5em, innerrightmargin=.5em, %
    }

\mdfdefinestyle{ranswer}{ %
    backgroundcolor=brown!10, %
    linecolor=black } 

\begin{document}

\mainmatter

\title{SMT-Based Model Checking of \\ Industrial Simulink Models}

\author{
    Daisuke Ishii\inst{1} \and 
    Takashi Tomita\inst{1} \and 
    Toshiaki Aoki\inst{1} \and \\
    The Quyen Ngo\inst{2} \and 
    Thi Bich Ngoc Do\inst{3} \and 
    Hideaki Takai\inst{4} }
\institute{
    Japan Advanced Institute of Science and Technology, Japan
    \and VNU University of Science, Vietnam
    \and Posts and Telecommunications Institute of Technology, Vietnam 
    \and GAIO Technology Co., Japan 
}

\maketitle

\begin{abstract}
    The development of embedded systems requires formal analysis of models such as those described with MATLAB/Simulink.
    However, the increasing complexity of industrial models makes analysis difficult.
    This paper proposes a model checking method for Simulink models using SMT solvers.
    The proposed method aims at
    (1) automated, efficient and comprehensible verification of complex models,
    (2) numerically accurate analysis of models, and
    (3) demonstrating the analysis of Simulink models using an SMT solver (we use Z3).
    It first encodes a target model into a predicate logic formula in the domain of mathematical arithmetic and bit vectors.
    We explore how to encode various Simulink blocks exactly.
    Then, the method verifies a given invariance property using the $k$-induction-based algorithm that extracts a subsystem involving the target block and unrolls the execution paths incrementally.
    In the experiment, we applied the proposed method and other tools to a set of models and properties.
    Our method successfully verified most of the properties including those unverified with other tools.
    \keywords{SMT solvers \and Model checking \and MATLAB/Simulink}
\end{abstract}

\section{Introduction}

Complex embedded systems are developed using a model-based approach,
in which a \emph{model} of a system is developed virtually before the actual implementation~\cite{LeeSeshia2017};
typical development targets are vehicles and robots.
\emph{MATLAB/Simulink} (Sect.~\ref{s:simulink}) is a tool for developing cyber-physical system (CPS) models.
It provides a graphical language and a numerical simulation engine. 

As ISO-26262 recommends,
formal analysis of models is important to assure the quality of products 
in the model-based development.
A MATLAB toolbox \emph{Simulink Design Verifier (SLDV)}~(Sect.~\ref{s:xp}) provides a set of blocks to represent properties and dedicated model checking functions.
Notably, checking invariance properties plays a crucial role in test generation.
%
%
However, as industrial models become complex, several issues arise in the formal analysis:
\begin{itemize}
\item \emph{Scalability issue} due to the increase in the time taken by checking properties.
\item \emph{Reliability issue} due to the approximation applied during model checking by the tools such as SLDV.
\item \emph{Explainability issue}.
    The detail of the model checking process and the underlying ``formal methods'' of the tool are unknown.
\end{itemize}

\emph{SMT solvers}~\cite{Kroening2016a} are a core technology of formal methods~\cite{Biere2018}.
They handle decision problem instances in various domains e.g. reals, integers, and bit vectors.
Recently, it has become possible to verify properties on floating-point (FP) numbers~\cite{Brain2019}.
In terms of application, they have been applied to analysis of Simulink models (e.g. \cite{Ren2016,Filipovikj2019,Bourbouh2020}).
Yet state-of-the-art solvers are efficient, their scalability is limited in principle; in our preliminary experiments, the execution time blew up when analyzing industrial Simulink models directly.

The objective of this paper is to realize an invariance model checking method that is efficient, formal and comprehensible.
We also aim at a feasibility study of analyzing industrial Simulink models using SMT solvers.
The contributions of this paper are summarized as follows:
\begin{itemize}
    \item \emph{SMT-based model checking method}. 
        We consider invariance properties and verify them with SMT-based model checking (Sect.~\ref{s:method}). 
        The method consists of an encoder from Simulink models to logic formulas (in SMT-LIB format) and a model checker.
        We present two encoding methods (Sect.~\ref{s:enc}):
        \emph{Approximate encoding} based on mathematical numerals and \emph{exact encoding} based on bit vectors.
        For model checking, we propose an algorithm that iteratively applies $k$-induction while 
        expanding the paths unrolled and the local subsystem scope (Sect.~\ref{s:mc}).
        %
    \item \emph{Experimental results with artificial and industrial Simulink models}.
        Sect.~\ref{s:xp} reports the results of the verification of nine models;
        we experimented using the proposed method, SLDV and CoCoSim for comparison.
        We explain that our method processed the models correctly and effectively;
        the advantages over the other tools and the validity of using our method in industrial settings are discussed.
\end{itemize}

\section{Simulink}
\label{s:simulink}

\emph{Simulink}\footnote{\url{https://www.mathworks.com/products/simulink.html}.} is a MATLAB toolbox for modeling synchronous and hybrid systems based on a graphical language.
The targets are described as timed models, either with a continuous timeline or with a timeline discretized with a fixed sample time; in this work, we assume the latter.
Simulink models are diagrams representing hierarchical directed graphs with edges, called \emph{lines}, and nodes, called \emph{blocks}, of various kinds.

Example Simulink models are shown in Fig.~\ref{f:ex}.
Model $S_1$ describes an integration with a feedback loop, in which the input value is added with the output value of the previous step with gain $0.9$; 
here, blocks of type \verb|Add|, \verb|Constant|, \verb|Gain| and \verb|Unit Delay| are utilized;
also, block \verb|Saturate| is used to limit the input range to $[-1,1]$.
Each block is configured with its \emph{parameters} such as gain factor, saturation threshold, and data type of the signal to be processed.
Model $S_2$ embeds $S_1$ as a \emph{subsystem} to model in a hierarchical way.
$S_2$ describes a branch using a \verb|Switch| block that outputs a constant signal with a value of $1$ or $2$, depending on whether the input is greater than $5$ or not.
Model $S_3$ is a more complex example with rate transition and a subsystem with an \verb|Enable| port.
Initially, the subsystem \verb|One| is deactivated (outputs 0); it will be activated when \verb|Compare2| outputs $\True$, but the activation occurs at a 10-fold period.
Model $S_4$ exemplifies matrix and bus signals. It outputs a signal that combines two named elements, a scalar signal \verb|e1| and a matrix signal \verb|e2|.

\begin{figure}[!t]
    \centering
    \subfloat[\label{f:ex:1} Model $S_1$: An integrator.]{%
        \includegraphics[width=.435\textwidth]{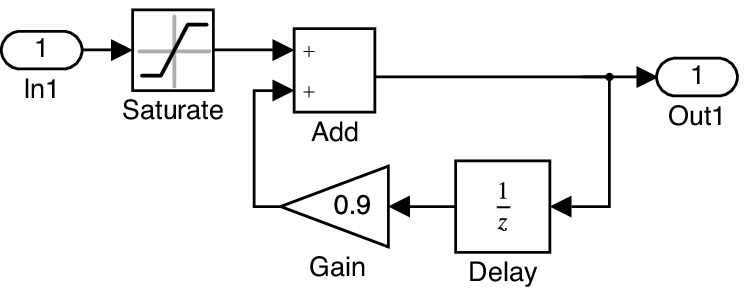} }
    \quad
    \subfloat[\label{f:ex:2} Model $S_2$: A switch after $S_1$.]{%
        \includegraphics[width=.465\textwidth]{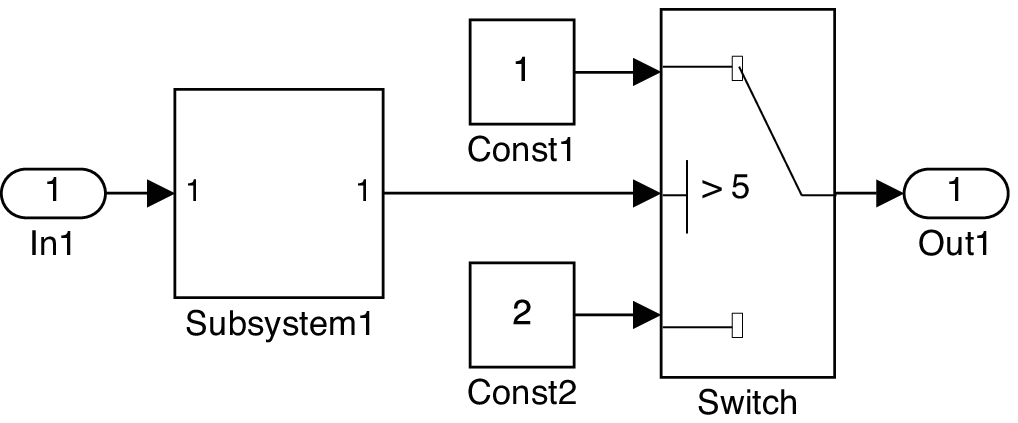} }
    
    \subfloat[\label{f:ex:3} Model $S_3$: An example with multiple rates and an enabled subsystem.]{%
        \includegraphics[height=.185\textwidth]{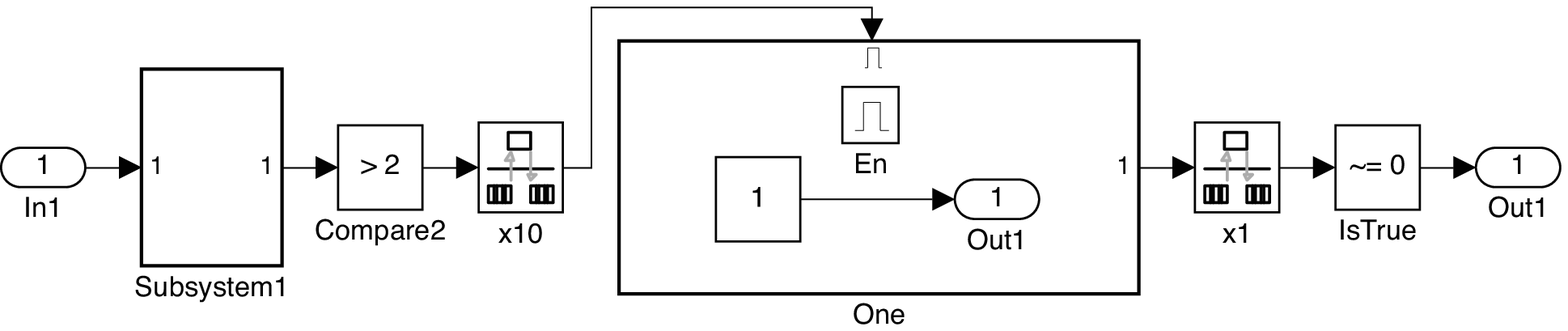} }

    \subfloat[\label{f:ex:4} Model $S_4$: Bus signal.]{%
        \includegraphics[width=.39\textwidth]{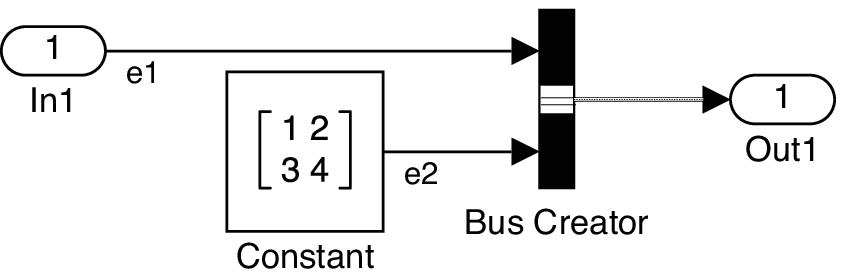} }
    \quad
    \subfloat[\label{f:exec} Example of executing $S_1$.]{%
        \includegraphics[width=.49\textwidth]{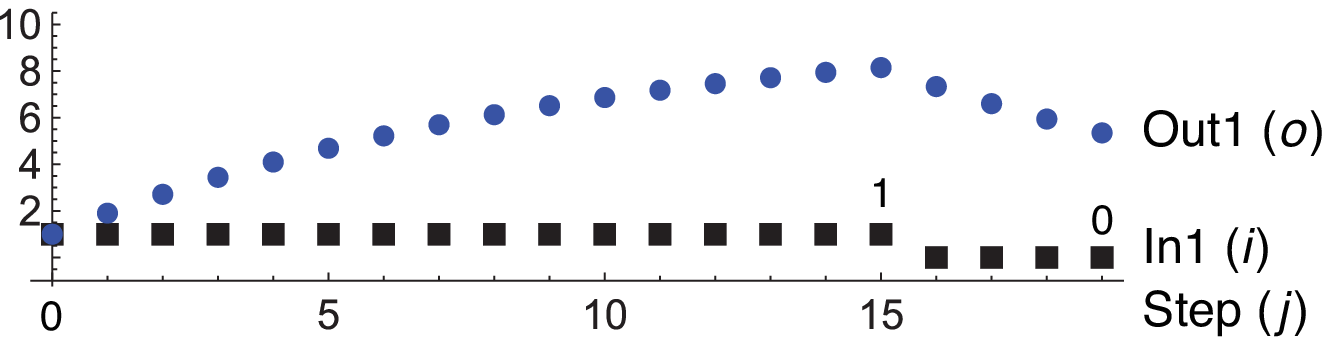} }
    \caption{Example Simulink models.}
    \label{f:ex}
\end{figure}

Primary function of Simulink is numerical simulation, i.e. to obtain output signals of models.
In this paper, we regard a \emph{signal} as a bounded sequence of output values; $j$-th value is output at time $j \times \mathit{st}$ ($j \geq 0$ and $\mathit{st}$ is a configured sample time).
Example input and output signals are shown in Fig.~\ref{f:exec}.

Simulink models have a tree structure consisting of subsystems.
Accordingly, each block in models can be located by a path i.e. a sequence of subsystem names ending with a block name.
We will use this structure to analyze models locally.

\vspace*{-.75em}

\subsubsection{Formalization of Simulink models.}

A discrete-time Simulink model can be regarded as a transition system.\footnote{This formalization may not be applicable to some discrete-time Simulink models, e.g. signal delays for variable lengths. Support for such models is a future work.}

\begin{definition}
    Assume sets $\SIn$, $\SOut$ and $\SSt$ of \emph{input}, \emph{output}, and \emph{state variables}.
    For a set of variables $\mathcal{V}=\{v_1,\ldots,v_n\}$, we denote their \emph{domain} $D_1\!\times\!\cdots\!\times\!D_n$ by $D(\mathcal{V})$.
    A \emph{transition system} $(\Init,\Trans)$ consists of an \emph{initial condition} $\Init \subseteq D(\SSt)$ and a \emph{transition relation} $\Trans \subseteq D(\SSt)\!\times\!D(\SSt)\!\times\!D(\SIn)\!\times\!D(\SOut)$.
\end{definition}

The model $S_1$ is interpreted as $(\Init_1,\Trans_1)$, where:
\begin{align*}
    \Init_1(s) :\Leftrightarrow s = 0, \quad
    \Trans_1(s, s', i, o) :\Leftrightarrow
    o = \max \{-1, \min \{i, 1\}\} + 0.9 s \land
    s' = o.
\end{align*}
Input/output variable $i$/$o$ represents the value of an input/output signal at a step.
The state variable $s$/$s'$ is necessary for \verb|Delay| to represent signal values before/after the transition.
In the same way, $S_2$ is interpreted as $(\Init_2,\Trans_2)$:
\begin{align*}
    \Init_2(s) ~:\Leftrightarrow~& ~ \Init_1(s), \\[-1em]
    \Trans_2(s, s', i, o) ~:\Leftrightarrow~& ~
    \exists i', \exists o', ~
    \Trans_1(s, s', i', o') ~\land~ 
    i = i' ~\land~ o = \begin{cases}
        1 & \text{if $o' > 5$}, \\
        2 & \text{else}.
    \end{cases}
\end{align*}
Predicates are defined based on $(\Init_1,\Trans_1)$. 
Note how variables are handled when subsystemizing; state variables of $S_1$ are inherited to $S_2$; placeholders for the input and output of $S_1$ are prepared locally in the rhs.

We assume that variables are typed as an instance of $\TC$, defined inductively as follows:
\begin{align*}
    \TC &::= \TN ~|~ d \TN ~|~ (d_1 \times\cdots\times d_m) \TN ~|~ \mathit{Bus} \\
    \TN &::= \mathtt{boolean} ~|~ \mathtt{uint}n ~|~ \mathtt{int}n ~|~ \mathtt{double} ~|~ \mathtt{single} ~|~ \mathtt{half} 
\end{align*}
where $d_\Box \in \mathbb{N}$, $m \geq 2$, and $n \in \{8,16,32,64\}$.
Each term of $\TC$ represents scalars, $d$-ary vectors, (possibly higher-dimension) matrices, and buses. 
Buses are concatenated values that combine named members of certain types; see Sect.~\ref{s:enc:complex} for how buses are analyzed.
$\TN$ consists of Boolean type, unsigned/signed integer types and three types for FP numbers.%
\footnote{Other than these types, there are types for fixed-point numbers, strings, enumeration values, and user-defined $\mathtt{ValueType}$ objects; support of these types is a future work.}
Basically, Simulink models are statically typed based on their descriptions and dialog settings.
For instance, the variables $i$, $o$ and $s$ of $S_1$ can be typed as scalar numerical type $t$, e.g. $\mathtt{uint8}$ and $\mathtt{double}$.
Also, they can be typed as $dt$, where $d \in \mathbb{N}$, by configuring each block for element-wise processing of vector values.

Signals obtained by numerical simulation are formalized by execution paths.
\begin{definition}
    Given $(\Init,\Trans)$,
    \emph{execution paths} (or \emph{executions}) of length $k$ are
    \begin{equation*}
        s_{-1} \xrightarrow{i_0/o_0} s_0 \xrightarrow{i_1/o_1} s_1 \cdots s_{k-2} \xrightarrow{i_{k-1}/o_{k-1}} s_{k-1},
    \end{equation*}
    where $s_\Box \in \Dom{\SSt}$, $i_\Box \in \Dom{\SIn}$, $o_\Box \in \Dom{\SOut}$, $\Init(s_{-1})$, and $\Trans(s_{j-1},\AB s_j, \AB i_j, \AB o_j)$ holds for $j \in [0,k-1]$.
    \emph{Input, output} and \emph{state signals} are the traces 
    $i_0 \cdots i_{k-1}$,
    $o_0 \cdots o_{k-1}$, and 
    $s_{-1} \cdots s_{k-1}$ of an execution path.
\end{definition}
The input and output signal values at the initial time are represented by $i_0$ and $o_0$.
Signals can be depicted as in Fig~\ref{f:exec}.

\section{SMT-Based Model Checking}
\label{s:method}

\begin{figure}[t]
    \centering
    \includegraphics[width=1.01\textwidth]{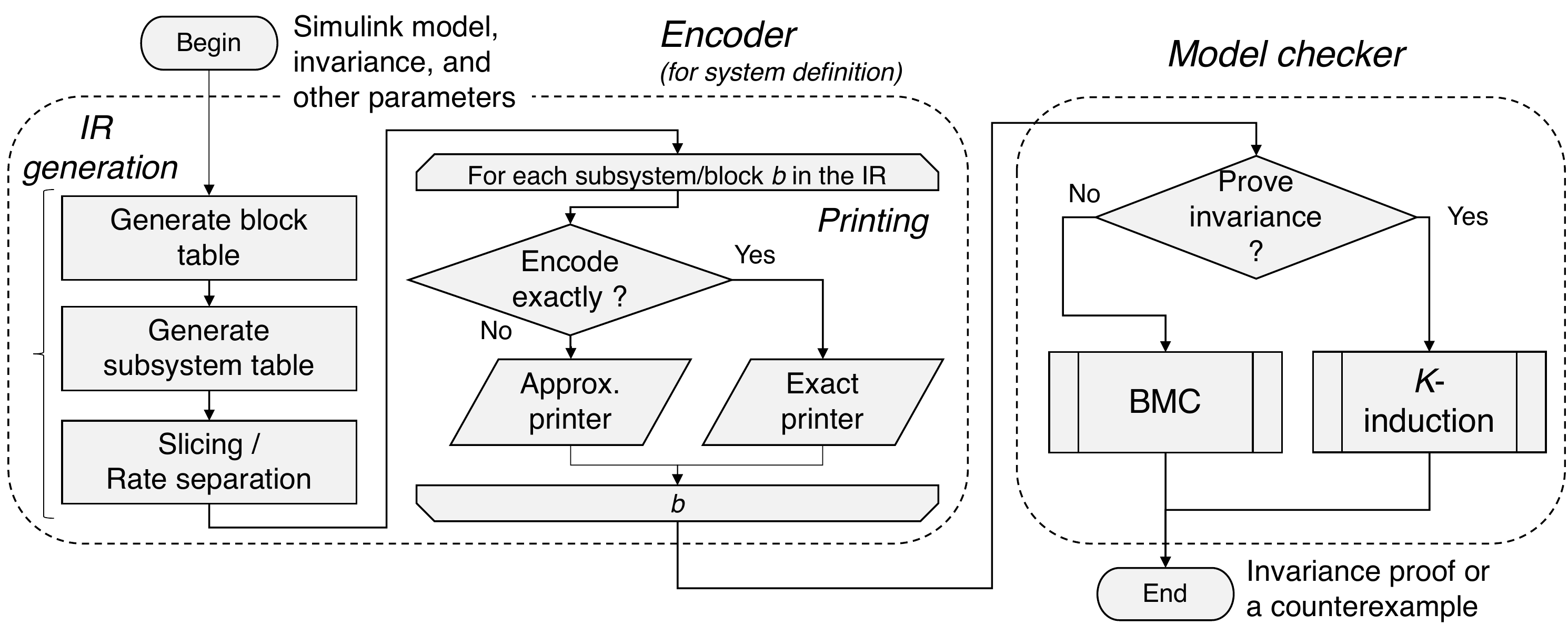}
    \caption{\label{f:method} Process of SMT-based model checking.}
\end{figure}

\emph{SMT (satisfiability modulo theories) solvers} are automated provers for the satisfiability of logic formulas that involve predicates in various theories e.g. integer, real and FP arithmetic.
In this paper, we apply a representative implementation \emph{Z3}\footnote{\url{https://github.com/Z3Prover/z3}.}
to the analysis of Simulink models.

We assume an invariance property of a Simulink model and verify that it holds for the model or violated in an execution.
Typically, such properties can represent a test objective; a counterexample corresponds to a test case and a valid invariance indicates a dead logic.
\begin{definition}
    Given a model $(\Init,\Trans)$, an \emph{invariance} is described by a formula $\Box\phi$, where $\phi \subseteq \Dom{\SIn}\!\times\!\Dom{\SSt}$ is a predicate on the input and state variables.
    Assume an execution path involving input signal $i_0\cdots i_{k-1}$ and state signals $s_{-1}\cdots s_{k-1}$. Then, it is a \emph{counterexample} if $\neg\phi(s_{j-1},i_j)$ holds for a $j\in[0,k\!-\!1]$.
    Invariance $\Box\phi$ \emph{holds} for a model iff there is no counterexample.
\end{definition}
For example, control condition of the \verb|Switch| block in $S_2$ is described by an invariance $\Box (o' > 5)$, where $o'$ represents $\max\{-1,\min\{i,1\}\} + 0.9 s$.
$S_2$ does not satisfy $\Box(o' > 5)$ as shown in Fig.~\ref{f:exec}.
The objectives in Simulink coverage testing are regarded as invariance properties defined for some block types.%
\footnote{\url{https://www.mathworks.com/help/slcoverage/ug/model-objects-that-receive}\AB\texttt{-coverage.html}.}

In this paper, we propose a process illustrated in Fig.~\ref{f:method} for checking an invariance $\Box\phi$ of a Simulink model $(\Init,\Trans)$. It mainly consists of two parts:
\begin{enumerate}
\item Encode a target Simulink model and a property into the input format of SMT solvers (Sect.~\ref{s:enc}).
    It generates a definition of the transition system $(\Init,\Trans)$ represented by the model.
    The predicate $\phi$ is instrumented in the definition.
    The process consists of generation of intermediate representation (IR) of the model and printing of the IR.
    The process involves several steps to handle industrial models.
        Notably, there are 
        a printing step that can generates exact machine representation of numerals (Sect.~\ref{s:enc:exact}), and
        steps for slicing and separation of different rate portions of a model (Sect.~\ref{s:enc:complex}).
\item Model checking of the invariance $\Box\phi$ (Sect.~\ref{s:mc}).
    Verification is basically done by encoding a bounded execution path of the model and by feeding it to an SMT solver.
    We use two methods: A bounded model checking (BMC) method for falsification and a $k$-induction method to prove the invariance.
    We propose an iterative process regarding the parameter $k$ (Alg.~\ref{a:kind}) and a strategy to efficiently expand the target subsystem.
\end{enumerate}

{The basic techniques employed by the proposed method are known ones;
the results of encoding are similar to those obtained by a combination of the CoCoSim~\cite{Bourbouh2020} and Kind2~\cite{Champion2016b} tools;
BMC and $k$-induction are basic SMT-based methods.
In this work, we extend the techniques; for example, we support the encoding of industrial models and we examine a subsystem-wise model checking for efficiency;
see also discussions in Sect.~\ref{s:related}.}

%


When a subsystem $S'$ of a model $S$ is given a property $\Box\phi$, it can be verified \emph{locally} for $S'$ or \emph{globally} for $S$.
Our subsystem-wise method starts from a local verification and gradually attempts the verification for the parent systems until the invariance is proved.
Let $\mathcal{V}_\Box$ and $\mathcal{V}_\Box'$ be the sets of variables for $S$ and $S'$.
To verify globally, we consider a translation $\phi\!\uparrow\!S$ of the predicate $\phi$ in $S$, which is straightforward since $\SSt' \subseteq \SSt$ and the Simulink model $S$ should describe a relation in $D(\SIn)\!\times\!D(\SSt)\!\times\!D(\SIn')$ (note that $\phi \subseteq D(\SIn')\!\times\!D(\SSt')$).
There is the following relationship between the two:
\begin{proposition} \label{prop:local}
    If $\Box\phi$ holds for $S'$, $\Box(\phi\!\uparrow\!S)$ holds for $S$.
\end{proposition}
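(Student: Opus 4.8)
The plan is to argue by contraposition: assume $\Box(\phi\!\uparrow\!S)$ fails for $S$ and manufacture a counterexample to $\Box\phi$ for $S'$. Concretely, suppose we are given an execution path $s_{-1} \xrightarrow{i_0/o_0} s_0 \cdots s_{k-1}$ of $S$ and an index $j \in [0,k-1]$ with $\neg(\phi\!\uparrow\!S)(s_{j-1}, i_j)$.

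First I would make explicit the compositional pattern behind subsystemization that the excerpt only illustrates on $S_2$: when $S'$ is a subsystem of $S$, the transition system of $S$ is built from that of $S'$ by (i) inheriting the state variables, so $\SSt' \subseteq \SSt$ and there is a projection $\pi_{\SSt'} : D(\SSt) \to D(\SSt')$; (ii) introducing local placeholders $i', o'$ for the input and output of $S'$ and existentially quantifying them; and (iii) wiring $i'$ (and $o'$) into the rest of $S$ via the relation $W \subseteq D(\SIn) \times D(\SSt) \times D(\SIn')$ that $S$ describes. Concretely, $\Init(s)$ implies $\Init'(\pi_{\SSt'}(s))$, and $\Trans(s, s', i, o)$ implies that there are $i', o'$ with $\Trans'(\pi_{\SSt'}(s), \pi_{\SSt'}(s'), i', o')$ and $W(i, s, i')$; and $(\phi\!\uparrow\!S)(s, i)$ is defined to hold exactly when $\phi(\pi_{\SSt'}(s), i')$ holds for the $i'$ with $W(i, s, i')$.

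Then the core step is a projection lemma. From the execution of $S$ above, pick at each step $j$ the witnesses $i'_j, o'_j$ supplied by (ii)--(iii) and set $\sigma_j := \pi_{\SSt'}(s_j)$. By (i) we get $\Init'(\sigma_{-1})$, and by (ii)--(iii) we get $\Trans'(\sigma_{j-1}, \sigma_j, i'_j, o'_j)$ for every $j$, so $\sigma_{-1} \xrightarrow{i'_0/o'_0} \sigma_0 \cdots \sigma_{k-1}$ is a genuine execution path of $S'$. Since $\Box\phi$ holds for $S'$, it gives $\phi(\sigma_{j-1}, i'_j)$ for every $j$; combined with $W(i_j, s_{j-1}, i'_j)$, which holds by construction, this is exactly $(\phi\!\uparrow\!S)(s_{j-1}, i_j)$, contradicting the choice of $j$. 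The same argument also shows why only this implication is available: the converse would require that every execution of $S'$ arise as the projection of some execution of $S$, which fails in general because $S$ may restrict the inputs fed to $S'$.

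The main obstacle is not a hard argument but that the excerpt fixes subsystemization only through the worked example $S_2$; the proof must first commit to the general compositional form of $\Trans$ and of $\phi\!\uparrow\!S$ sketched above, after which the projection bookkeeping is routine. A secondary subtlety is that $S$ describes a \emph{relation} in $D(\SIn) \times D(\SSt) \times D(\SIn')$ rather than a function: if $W$ is genuinely nondeterministic, $\phi\!\uparrow\!S$ should be read with a universal quantifier over the admissible local inputs $i'$, and one must then check that the witness $i'_j$ used above lies in that set --- which it does, being the value actually realized along the execution.
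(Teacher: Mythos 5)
Your proposal is correct and follows essentially the same route as the paper, whose entire proof is the one-line observation that the contraposition holds because a counterexample of $\Box\phi$ can be extracted from one of $\Box(\phi\!\uparrow\!S)$. You simply spell out the extraction explicitly (projecting the state signal onto $\SSt'$ and pulling out the existential witnesses $i'_j, o'_j$ to obtain an execution of $S'$), which is exactly the bookkeeping the paper leaves implicit.
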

\begin{proof}
The contraposition obviously holds because a counterexample of $\Box\phi$ can be extracted from that of $\Box(\phi\!\uparrow\!S)$.
\end{proof}

\section{SMT-LIB Encoding of Simulink Models}
\label{s:enc}

\begin{figure}[t!]
    \lstset{frame=single}
    \lstset{numbers=left}
%
    \begin{lstlisting}[basicstyle=\ttfamily\footnotesize, columns=flexible, keepspaces=true]
;; Variable representing the current step.
(declare-const curr_step Int)
;; Whether to verify the induction step.
(declare-const flag_kind Bool)

;; Specification of subsystem S1.
(define-fun init1 ((s@0 Real)) Bool (= s@0 0))
(define-fun trans1 
  ((c Int) (s@0 Real) (s@1 Real) (i Real) (o Real)) Bool
  (let ((lv (saturate 1 (- 1) i)))
    (and (= o (+ lv (* 0.9 s@0))) (= s@1 o)) ))

;; Specification of the parent system S2.
(define-fun init2 ((s@0 Real)) Bool (init1 s@0))
(define-fun trans2
  ((c Int) (s@0 Real) (s@1 Real) (i Real) (o Int)) Bool
  (exists ((i_ Real) (o_ Real))
    (and (trans1 c s@0 s@1 i_ o_) (= i_ i) (= o (ite (> o_ 5) 1 2))

      ;; Objective instrumentation.
      (=> (= c curr_step) (not (> o_ 5)))
      ;; Assumption for the induction step.
      (=> flag_kind (=> (< c curr_step) (> o_ 5))) )))

;; Encoding of the execution path.
(declare-const s@i Real) (assert (init2 s@i))

(declare-const s@0 Real) (declare-const i@0 Real) 
(declare-const o@0 Int) (assert (trans2 0 s@i s@0 i@0 o@0))

(declare-const s@1 Real) (declare-const i@1 Real) 
(declare-const o@1 Int) (assert (trans2 1 s@0 s@1 i@1 o@1))

;; Check the reachability at step 1.
(check-sat-assuming (and (= curr_step 1) (not flag_kind)))
\end{lstlisting}
    \caption{\label{f:s1:enc}Simulink models $S_1$ and $S_2$ encoded in SMT-LIB.}
\end{figure}

\emph{SMT-LIB}~\cite{Barrett2021}
is an input format for SMT solvers, which has a LISP-like prefix grammar.
Here, we describe the encoding method with an example.
Fig.~\ref{f:s1:enc} lists an encoded SMT-LIB description of the model $S_2$ (and $S_1$ as a subsystem).
At \textbf{Lines~6--11}, predicates $\Init_1$ and $\Trans_1$ of $S_1$ are defined as \lstinline|Bool|-valued functions \lstinline|init1| and \lstinline|trans1|.
\textbf{Lines~13--18} define $(\Init_2,\Trans_2)$ in the same way.
The encoding process is either \emph{approximate} or \emph{exact} (Sect.~\ref{s:enc:exact}); the example is an approximate encoding.
Assuming that variables $s$, $s'$, $i$ and $o$ of $S_1$ are typed as FP numbers, 
they are encoded with variables \lstinline|s@0|, \lstinline|s@1|, \lstinline|i| and \lstinline|o| of sort \lstinline|Real|, which represents mathematical rational numbers.
In the same way, variable $o$ of $S_2$ is sorted as unbounded integers.
%
\textbf{Lines~20--23} instrument the invariance $\Box(\texttt{o\char`_}>5)$.
Two global variables \lstinline|curr_step| and \lstinline|flag_kind| and the argument \lstinline|c| of transition predicates are introduced for the verification (Sect.~\ref{s:mc}).
\textbf{Lines~25--32} describe a length-2 execution path where values at each step are parameterized by fresh variables.
The predicate \lstinline|init2| is asserted for the initial step $-1$ and \lstinline|trans2| is asserted for the later steps.
State variables are shared between two steps.
Finally, at \textbf{Line~35}, command \lstinline[mathescape]|(check-sat-assuming $c$)| will invoke a solving process;
it will assume the argument constraint $c$ temporarily.
With setting a step number to \lstinline|curr_step| and disabling \lstinline|flag_kind|, the violation of the invariance at step~1, i.e. reachability to the state such that $\neg(\texttt{o\char`_}>5)$, should be checked.
This example will result in $\Unsat$; unrolling the execution path up to step~6 will result in $\Sat$.


%

\subsection{Exact Encoding of Machine-Representable Numbers}
\label{s:enc:exact}

For reliable analysis, we propose to encode FP numbers and integers using the vocabulary provided by the SMT-LIB theories \lstinline|FloatingPoint| and \lstinline|BitVector|.
This encoding method exactly describes rounded values, overflow cases, etc.
Solving exact formulas tends to be expensive; thus, we use this method along with the approximate method.
\lstinline|FloatingPoint| 
provides the sorts (e.g. \lstinline|Float64| for double-precision FP numbers), arithmetic operators (e.g. \lstinline|fp.add| for addition), and utility functions (e.g. \lstinline[mathescape]|(_ fp.to_sbv $n$)| that converts an FP number to a signed bit vectors of length $n$).
For machine integers, we prepare necessary vocabularies based on \lstinline|BitVector| as in \cite{Baranowski2020};
we use the sort \lstinline[mathescape]|(_ BitVec $n$)| to represent $n$-bit integers and define the functions for arithmetic operations, e.g. \lstinline|int64.add| (they are  defined in the beginning of the encoder outputs).
%

For example, the first equation in Line~11 of Fig.~\ref{f:s1:enc} is encoded as:
\begin{lstlisting}[basicstyle=\ttfamily\small, xleftmargin=0em]
(= o (fp.add RNE lv (fp.mul RNE (fp #b0 #b01111111110 
  #b1100110011001100110011001100110011001100110011001101 ) s@0 )))
\end{lstlisting}
Variables \lstinline|o|, \lstinline|lv| and \lstinline|s@0| are of sort \lstinline|Float64|.
\lstinline|RNE| represents a rounding mode. 

\subsection{Encoding of Complex Simulink Models}
\label{s:enc:complex}

This section describes techniques for more complex models.

A subsystem $(\Init,\Trans)$ can be executed conditionally by adding an activation port,
e.g. \verb|EnablePort| in Fig.~\ref{f:ex:3}. 
When deactivated, the subsystem outputs the initial value or the previous value.
We encode such subsystems by introducing wrappers for $\Init$ and $\Trans$.
For example, $\Trans$ of the subsystem \verb|One| in Fig.~\ref{f:ex:3} is encoded into the following wrapper predicate:
\begin{lstlisting}[basicstyle=\ttfamily\small, xleftmargin=0em]
(define-fun trans ( (ien Bool) (so@0 %$t$%) (so@1 %$t$%) (o %$t$%) ) Bool
  (ite ien
    ;; Activate the body transition predicate.
    (and (trans_body o) (= so@1 o) )
    ;; Else, output the prev value and keep the state unchanged.
    (and (= o so@0) (= so@1 so@0)) ))
\end{lstlisting}
%
It assumes that the body content (to output constant 1) \lstinline|trans_body| is pre-defined.
Output variable \lstinline|o| of sort $t$ is inherited from \lstinline|trans_body|.
Variables \lstinline|ien|, \lstinline|so@0| and \lstinline|so@1| are introduced to represent a signal input to the \verb|EnablePort| and state variables to keep track of a previous output value when deactivated.
$\Init$ is also wrapped and the initial output value is configured.

Activation using \verb|TriggerPort| is encoded with a wrapper and a behavioral description of the trigger signal such as \verb|Rising| (we pre-define pattern functions as in \cite{Tripakis2005}).
%
We also use wrapper predicates to encode multi-rate models (e.g. $S_3$ in Fig.~\ref{f:ex:3}).
For a subsystem configured to be executed with a slower rate, we encode with a wrapper equipped with a local counter variable,
which computes the activation period and activates the body predicate accordingly.
Multiple rates may be present in a subsystem; in such a case, we run a preprocess to divide the subsystem into separate \emph{dummy} subsystems for each rate (Sect.~\ref{s:enc:impl}).

Encoding of vector signals is simply done by preparing scalar variables for each element of vector values.
For bus signals, we utilize their schema externally specified by a \emph{bus object data type}.%
\footnote{We consider only \emph{nonvirtual} buses.}
Based on the specification, we introduce an SMT-LIB sort that represents bus signal values.
For the model $S_4$ in Fig.~\ref{f:ex:4}, the sort and accessors for the elements are declared as follows:
\begin{lstlisting}[basicstyle=\ttfamily\small, xleftmargin=0em]
(declare-sort BO 0)
(declare-fun BO_e1 (BO) Int) ;; Accessor for member e1.
(declare-fun BO_e2_1_1 (BO) Real) ;;Accessor for member e2.
;; BO_e2_1_2, BO_e2_2_1, and BO_e2_2_2 are also declared.
\end{lstlisting}
Given a value \lstinline|bo| of sort \lstinline|BO|, \lstinline|(BO_e2_1_1 bo)| represents the element at $(1,1)$ of the matrix-typed member \verb|e2|.
Using the sort \lstinline|BO|, $\Trans_4$ is defined as follows:
\begin{lstlisting}[basicstyle=\ttfamily\small, xleftmargin=0em]
(define-fun trans ((c Int) (i Int) (o BO)) Bool
  (and (= (BO_e1 o) i) (= (BO_e2_1_1 o) 1) (= (BO_e2_1_2 o) 2) 
%\hspace{11.5em}%(= (BO_e2_2_1 o) 3) (= (BO_e2_2_2 o) 4) ))
\end{lstlisting}

\subsection{Implementation of the Encoder}
\label{s:enc:impl}

We have implemented the encoder as a MATLAB script (about 9000~LOCs).
The script implements the IR generation process and printers as shown in Fig.~\ref{f:method}.

In IR generation, we first prepare a \emph{block table} (BT), an array of \verb|struct| (record) data where each element represents a block (or a subsystem). 
Most content of a Simulink model
can be accessed via command \verb|get| (or \verb|get_param|) in a script.
So, BT collects necessary information,
e.g. block-line graph structure and type signature of each inport/outport.
Second, we generate another \verb|struct| array, \emph{subsystem table} (ST), that represents the tree structure of subsystems.
Each element corresponds to a transition system and contains lists of variables and a list of child subsystems.
Third, we modify BT and ST to support multi-rate models and to slice the target portion of the content.
Slicing is done by a backward reachability analysis on BT, starting from the objective block of the invariance property.
For each multi-rate ST element, we classify the blocks rate-wise, and then introduce dummy subsystems with dummy inports and outports.

Approximate and exact printers basically translate the content of a ST into an SMT-LIB description.
Each ST element is printed as definitions of corresponding $\Init$ and $\Trans$.
The body of each definition mainly contains assignments to the variables and their rhs are printed by traversing BT.

Our implementation supports encoding of 37 block types (but not all parameter settings).
Unsupported block instances are stubbed with local unconstrained variables in a predicate definition.


%

\section{Model Checking Methods}
\label{s:mc}

As described in Sect.~\ref{s:method}, 
we consider two methods for verifying an invariance $\Box\phi$ of a Simulink model $(\Init,\Trans)$, i.e. BMC and $k$-induction.
In the following, we abbreviate \lstinline[mathescape]|(check-sat-assuming $c$)| to $\mathsf{CSA}(c)$,
\lstinline|curr_step| in Fig.~\ref{f:s1:enc} to $\mathtt{c}$ and \lstinline|flag_kind| to $\mathtt{f}$, respectively.

Given $k \geq 1$, the BMC method searches for a counterexample of length-$k$ or less.
It performs the following steps for each $j \in [0,k\!-\!1]$.
(0)~Assume that $(\Init,\Trans)$ instrumented with $\phi$ has been encoded.
(1)~Encode the length-$(j\!+\!1)$ execution paths.
(2)~Feed the encoded result to an SMT solver and then solve the command 
    $\mathsf{CSA}(\mathtt{c} = j \land \mathtt{f} = \bot)$.
The process is done efficiently by incrementing from $j=1$,
encoding only $\Trans$ for the $j$-th step in (1).

The $k$-induction method~\cite{Sheeran2000} consists of the proofs of the following facts:
\begin{itemize}
    \item \emph{Base case}: ``$\phi$ is invariant for execution paths of length $k\!-\!1$ or less.''
    \item \emph{Induction step}: ``Assume execution paths of length $k$ that are not initialized. Let $j \in [0,k\!-\!2]$, and if $\phi$ holds for every $j$-th step of a path, then $\phi$ holds at the last $(k\!-\!1)$-th step of the path.''
\end{itemize}

\begin{algorithm}[t]
    \SetAlgoLined
    \SetKwInOut{Input}{Input}\SetKwInOut{Output}{Output}
    \Input{Simulink model $(\Init,\Trans)$, Invariance $\Box\phi$, $k \in \mathbb{N}_{\geq 2}$, Encode exactly? $b$}
    \Output{$\True$, $\False$ or $\Maybe$}
    \BlankLine
    $\mathsf{EncodeSystemAndAssert}_b(\Init,\Trans,\phi)$\;
    \For{$j \in [0,k\!-\!1]$}{
        $\mathsf{EncodeTransAndAssert}_b(\Trans,j)$\;
        \textbf{if}\ {$j > 0 \land \mathsf{CSA}(\mathtt{c}\!=\!j \land \mathtt{f}\!=\!\top) = \Unsat$}\ \textbf{then return}\ $\True$;\ \textbf{end}\\
        \If{$\mathsf{CSA}(\mathsf{EncodeInit}_b(\Init) \land \mathtt{c}\!=\!j \land \mathtt{f}\!=\!\bot) = \Sat$}{
            \textbf{If}\ {$(\Init,\Trans)$ is top-level}\ \textbf{then return}\ $\False$;\ \textbf{else return}\ $\Maybe$;\ \textbf{end}
        }
    }
    \Return{$\Maybe$}\;
    \caption{A $k$-induction procedure.}
    \label{a:kind}
\end{algorithm}

Alg.~\ref{a:kind} is an incremental procedure for proving the two facts.
We assume an SMT solver process running in the background.
The algorithm generates SMT-LIB expressions in several stages and feeds them to the solver at each stage.
The definition of the model is generated at \textbf{Line~1} and predicate $\Trans$ reaching step~$j$ is generated at \textbf{Line~3}.
Then, it verifies the base case (\textbf{Line~5}) and induction step (\textbf{Line~4}; checked one iteration after).
Based on the locality (Prop.~\ref{prop:local}), $(\Init,\Trans)$ can be a subsystem, but it is needed to be the top-level system to falsify the invariance (\textbf{Line~6}).

\subsubsection*{Implementation.}

BMC and $k$-induction methods have been implemented as MATLAB scripts (about 2500 LOCs).
In addition to the encoding process, they generate SMT-LIB expressions for execution paths during a verification process.
For SMT solving, they communicate with an external server that wraps Z3 via a TCP socket.

The $k$-induction script conducts verification subsystem-wise, starting from the objective subsystem of the invariance property.
The script repeatedly invoke Alg.~\ref{a:kind}, controlling the following two factors (initially $d := 1$ and $k := 1$):
\begin{enumerate}
\item \emph{Number $d$ of subsystem hierarchies}. We start applying Alg.~\ref{a:kind} on the subsystem ($d = 1$) that is targeted by the property. When the result is inconclusive, we increase the value of $d$, i.e. to invoke Alg.~\ref{a:kind} in the upper hierarchy.
\item \emph{Bound $k$ on the path lengths}. Once the whole hierarchy has been analyzed, we increase the value $k$ exponentially (and repeat for each subsystem again).
    Alg.~\ref{a:kind} is modified to skip the verification for an initial range of $[1,k\!-\!1]$ that has been processed in the previous round.
\end{enumerate}

\section{Experimental Evaluation}
\label{s:xp}

This section describes an empirical evaluation of the proposed method (abbreviated as ``PM'' in the following).
We have verified artificial and industrial Simulink models using our implementation and exiting tools (SLDV and CoCoSim).
Here, the purpose was to answer:
\textbf{RQ1}: Does PM correctly handle the collected Simulink models?
\textbf{RQ2}: How is the scalability of PM?
\textbf{RQ3}: Is the performance of PM better than other tools?

Experiments were conducted on 64-bit Ubuntu 20.04 virtual machines (with 4~cores and 8GB RAM),
running on a 2.2GHz Intel Xeon E5-2650v4 processor (12~cores) with 128GB RAM.
We used MATLAB~R2022a.
Execution time was limited to 1~hour.

\subsubsection{Evaluated tools.}

We used the following three tools.

(i)
\emph{PM} implemented (cf. Sect.~\ref{s:enc:impl} and \ref{s:mc})
as an add-on to a proprietary tool PROMPT.\footnote{\url{https://www.en.gaio.co.jp/products/prompt-2/}.} 
In the experiment, we applied the BMC method to obtain counterexamples of false properties, and the $k$-induction method to prove true properties.
BMC was performed against the entire system, while $k$-induction (Alg.~\ref{a:kind}) was performed subsystem-wise.
Each verification was performed in two ways using either approximate or exact encoder;
we respectively refer to them by ``PM/A'' and ``PM/E.''

(ii)
\emph{SLDV},\footnote{\url{https://www.mathworks.com/products/simulink-design-verifier.html}.} a tool for analyzing Simulink models based on formal methods.
Also, it seems to apply approximations in the analysis.
Among various functions of SLDV,
we used the ``\emph{property proving (PP)}'' function in this experiment.
If a \verb|Proof Objective| block representing an invariance property is added to a Simulink model,
it performs verification of whether it holds or be falsified by a counterexample.
We enabled the option ``FindViolation'' for false properties and ``Prove'' for true properties.

(iii)
\emph{CoCoSim} (version~1.2), a front-end tool for applying formal tools to Simulink.
Its ``prove properties'' function allows invariance verification using Kind2 (version~1.2), a model checker for Lustre programs.
Process of the function consists of translation from Simulink to Lustre and invocation of Kind2.

%

\subsubsection{Target Simulink models.}

We have prepared 5 artificial and 4 industrial examples to ensure that the different types of models could be handled correctly and effectively. 
We refer to the models as \Ex{1}--\Ex{9}; also, we refer to a model with a parameter $p$ as $\Ex{$i$}_p$.
Model~$\Ex{1}_\mathit{th}$ represents $S_2$ in Sect.~\ref{s:simulink} with switching threshold $\mathit{th}$.
Models~$\Ex{2}_\mathit{th}$ and $\Ex{3}_\mathit{th}$ describe second and fourth-order digital filters with threshold $\mathit{th}$ on the output signals.
Model~$\Ex{4}_\mathit{th}$ describes 32 nested counters that are reset at threshold $\mathit{th}$.\footnote{\url{https://github.com/dsksh/sl-examples}.}
Model~\Ex{5} is the multi-rate model $S_3$ with additional logic blocks.
Models~\Ex{6}--\Ex{8} are taken from the Lockheed Martin challenge~\cite{Elliott2016,Mavridou2020}.
We used the first, fourth and sixth problems.
Model~\Ex{9} describes 
a realistic controller using various types of blocks and externally-controlled subsystems.
The size of each model is shown in Table~\ref{t:stat}.

\subsection{Results}

For each model, we considered 2 or 4 invariance properties such that half of them are false and the rest are true;
false and true instances are denoted by $\Ex{$i$}_{\mathrm{P}j}$ and $\Ex{$i$}_{\mathrm{Q}j}$, respectively ($j\in\{1,2\}$).
The properties prepared can be regarded as test objectives for a block or subsystem, such as conditions on input/output signals and activation conditions.
We verified them using the tools.
The execution time of PM and SLDV is shown in Fig.~\ref{f:chart}.
The right side of Table~\ref{t:stat} shows the parameters of PM used for each property;
the number of sliced blocks that belong to the analyzed subsystems, 
the maximum bound $k$ used in the model checking, and
the number $d$ of analyzed subsystem hierarchies.
Results of the SMT-LIB encoding of the models (other than \Ex{9}) are available at 
\url{https://www.dropbox.com/s/ohwo5sq0vcak566/2022_icfem_smt2.zip}.

\subsection{Discussions}

\noindent
\textbf{RQ1 (correctness).} All the conclusive results of PM were confirmed correct.
For false properties, we manually simulated the obtained counterexamples using Simulink and confirmed that they were indeed so.
For true properties, we confirmed that the results of PM were the same as other tools.
Errors on $\Ex{9}_{\textrm{P2}}$ occurred due to unsupported blocks by the encoder.
Some block types, e.g. lookup tables and nonlinear arithmetic functions, are difficult to encode and/or to analyze. Our tool either causes an error when falsifying, or abstract them with stub variables in the encoding for an invariance proof.

\begin{wrapfigure}[7]{r}{0.3\textwidth}
    \vspace*{-3.5em}
    \includegraphics[width=.3\textwidth]{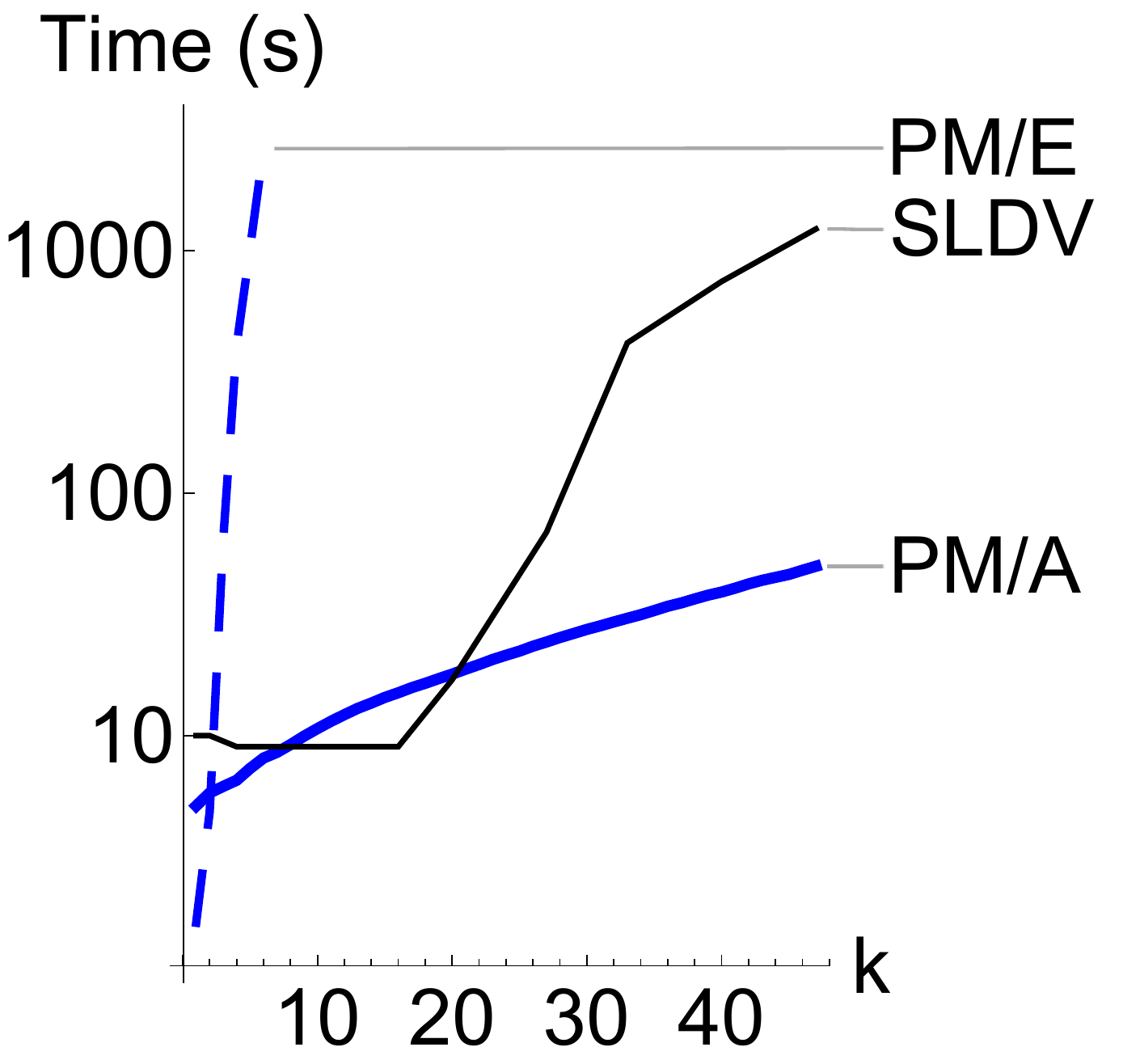}
\end{wrapfigure}

\vspace{.5em}

\noindent
\textbf{RQ2 (scalability).}
Using a parameterized model, we can observe the exponential increase of execution time.
Right figure shows the time needed to falsify the instances $\Ex{1}_{\mathit{th},\mathrm{P1}}$ with several $\mathit{th}$'s.
The scalability of PM/A was enough (better than SLDV on more than half of the instances) to handle most of the prepared models.
PM/E scaled an order of magnitude worse, limiting the number of instances it could handle.
As shown in Table~\ref{t:stat} and Fig.~\ref{f:chart}, the execution time increases basically with the number of encoded blocks, $\text{``\# b's''}\times k$.\footnote{It is likely to depend on other factors, e.g. the form of encoded formulas and the number of solutions; a detailed analysis is omitted from this paper.}
To improve the applicability, slicing and subsystem-wise process of PM is significant. Table~\ref{t:stat} shows that the numbers of encoded blocks were reduced, contributing to the number of instances verified in time.

\begin{table}[!t]
    \centering
    \small
    \caption{Statistics on the models and the solving process. The abbreviations ``s's'' and ``b's'' represent ``subsystems'' and ``sliced blocks,'' respectively. 
        Third to sixth sections correspond to false and true properties.%
        } \label{t:stat}
    \begin{tabular}{l|rr|r@{\quad}r|r@{~~}r|r@{~~}r@{\quad}r|r@{\quad}r@{\quad}r}
        \toprule
        \multicolumn{3}{c}{} 
        & \multicolumn{2}{c}{\raisebox{1em}{\tiny P1}} 
        & \multicolumn{2}{c}{\raisebox{1em}{\tiny P2}}
        & \multicolumn{3}{c}{\raisebox{1em}{\tiny Q1}} 
        & \multicolumn{3}{c}{\raisebox{1em}{\tiny Q2}} \\[-1.25em]
        \cmidrule(lr){4-5}
        \cmidrule(lr){6-7}
        \cmidrule(lr){8-10}
        \cmidrule(lr){11-13}
        Model          & \# blocks & \# s's  &  \# b's & $k$  &             \# b's & $k$ &  \# b's &      $k$ & $d$ & \# b's & $k$ & $d$ \\
        \midrule
        ~$\Ex{1}_\mathit{th}$ &  12 &     1  &      11 &   7  & \multicolumn{2}{c|}{---} &      11 &   1 &  1 & \multicolumn{3}{c}{---} \\
        ~$\Ex{2}_\mathit{th}$ &  16 &     1  &      14 &   2  &                 14 &   7 &      14 &  10 &  2 &     14 &  29 & 2 \\
        ~$\Ex{3}_\mathit{th}$ &  24 &     1  &      22 &  15  & \multicolumn{2}{c|}{---} & 22 & $\geq34$ &  2 & \multicolumn{3}{c}{---} \\
        ~$\Ex{4}_\mathit{th}$ & 290 &    32  &     289 &  12  & \multicolumn{2}{c|}{---} &     289 &   1 &  1 & \multicolumn{3}{c}{---} \\
        ~\Ex{5}               &  33 &     8  &      26 &  21  & \multicolumn{2}{c|}{---} &      25 &   1 &  1 & \multicolumn{3}{c}{---} \\
        ~\Ex{6}               & 479 
                                    &    39  &     321 &   2  &                327 &   1 &     259 &   1 &  1 &    321 &   1 & 2 \\
        ~\Ex{7}               & 291 
                                    &    18  &     116 &   1  &                103 &   1 &      58 &   1 &  2 &      9 &   1 & 3 \\
        ~\Ex{8}               & 712 
                                    &   188  &      27 &   1  &                327 &   1 &      27 &   1 &  3 &    327 &   1 & 3 \\
        ~\Ex{9}               & 574 
                                    &    30  &      19 &   1  &            407 & $\geq1$ &      12 &   1 &  1 &    407 &   1 & 3 \\
        \bottomrule
    \end{tabular}
\end{table}
\begin{figure}[!t]
    \centering
    \subfloat[\label{f:chart:1} Falsification.]{%
        \hspace{-14.9em} \includegraphics[width=\textwidth]{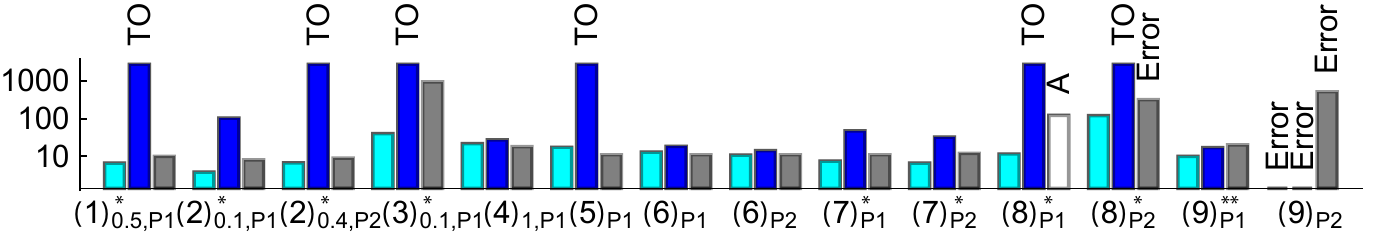}%
        \hspace{-19em} \raisebox{3.5em}{\includegraphics[width=.1\textwidth]{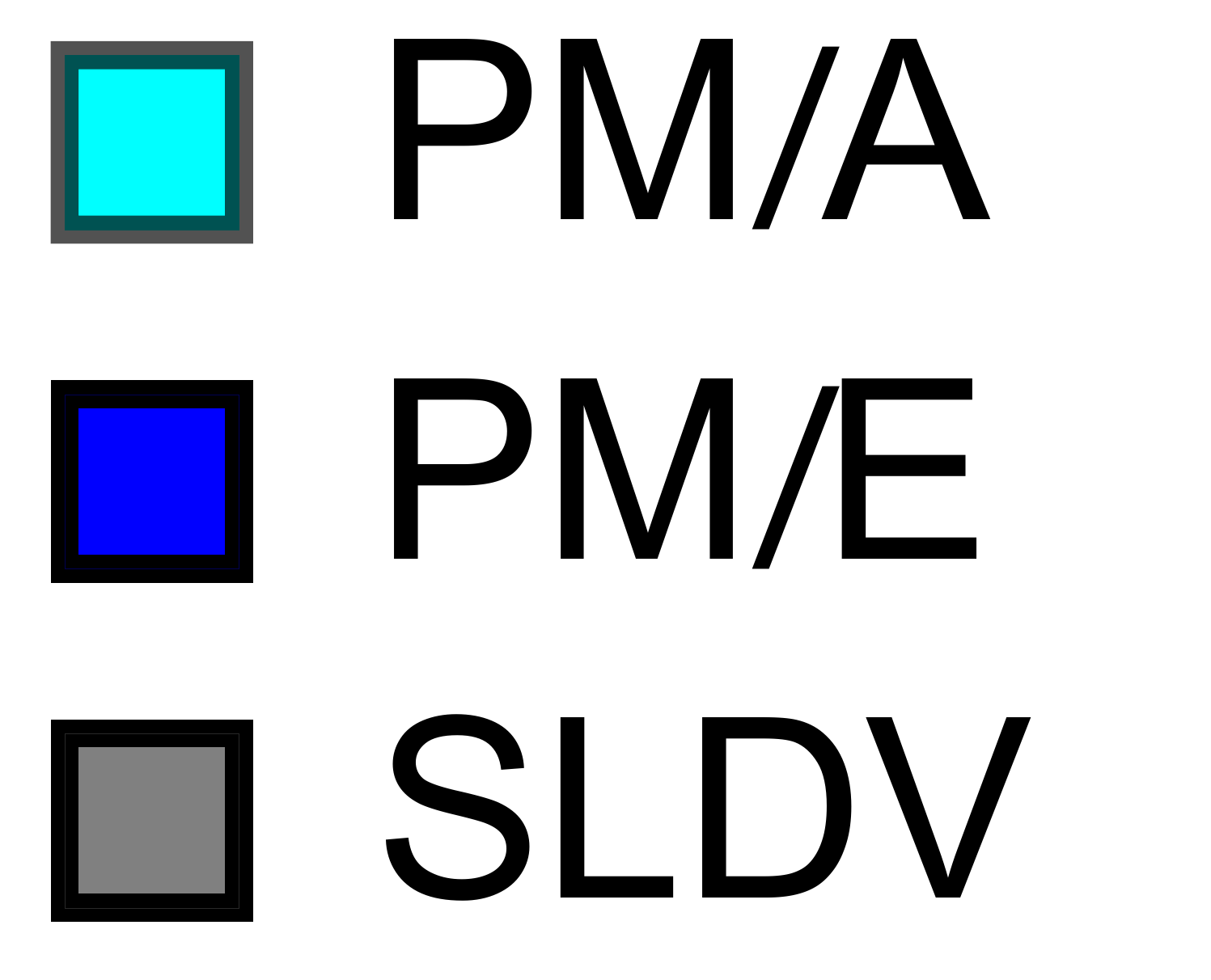}}
        }

    \vspace{-1em}

    \subfloat[Invariance verification.]{%
        \includegraphics[width=\textwidth]{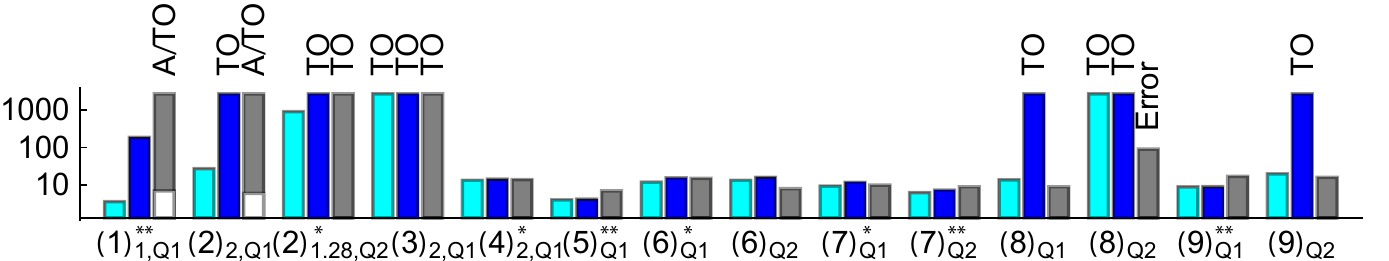} }
    \caption{Execution time in seconds. Results marked with ``TO'' are timeouts. ``A'' means ``solved under approximation (white portion shows time required).'' Superscript * (resp. **) indicates that PM/A (resp. PM/A and PM/E) outperforms SLDV.}
    \label{f:chart}
\end{figure}

\vspace{.5em}

\noindent
\textbf{RQ3 (tool comparison).}
PM/A outperformed the other tools for 17 out of 28 instances.
PM/E was able to handle 15 instances including industrial ones, although its performance was worse than PM/A (it ran out of time for the others).
Notably, we verified 3 instances that could not be handled by other tools ($\Ex{8}_{\mathrm{P2}}$, exact verification of $\Ex{1}_{1,\mathrm{Q1}}$ and approximate verification of $\Ex{2}_{1.28,\mathrm{Q2}}$).
Regarding the encoding method, PM/E should be used to prove a property reliably.
When falsifying properties, counterexamples obtained by PM/A can be certified by a simulation using Simulink.

SLDV resulted in timeouts for 4 true instances.
On $\Ex{1}_{1,\textrm{Q1}}$ and $\Ex{2}_{2,\textrm{Q1}}$, it first indicated ``valid under approximation'' and then ran out of time to prove the validity accurately.
It is not clear whether the ``approximation'' method is similar to PM.
Errors on $\Ex{8}_{\textrm{P2/Q2}}$ and $\Ex{9}_{\textrm{P2}}$ were ``due to nonlinearities.''

CoCoSim handled only model $\Ex{1}_{th}$.
5 models were resulted in errors during conversion to Lustre (due to unsupported blocks).
Verification of 3 models failed in the model checking process of Kind2.


%

\section{Related Work}
\label{s:related}


Model checking techniques using SAT/SMT solvers have been applied to various domains~\cite{Biere2018}.
%
Bourbouh et al.~\cite{Bourbouh2020} have developed the CoCoSim tool for Simulink models using an SMT-based model checker, Kind2~\cite{Champion2016b}.
The basic process in this paper is similar but we provide an exact encoding method and more support for industrial models.
Filipovikj et al.~\cite{Filipovikj2019} have proposed a bounded model checking method for a subset of Simulink models.
A related technology is model checking for the synchronous language Lustre~\cite{Halbwachs1993,Caspi2003}.
Versions of the Kind tool~\cite{Hagen2008a,Kahsai2011,Champion2016b} and Zustre~\cite{Kahsai2018} have been developed.
They have implemented techniques such as IC3~\cite{Bradley2011}, parallel solving, and Horn clause encoding;
they can be implemented in our method in the future.
Kind2 supports accurate encoding with machine integers; our method handles FP numbers in addition.

SMT-based methods require formalization of the target Simulink models.
Some of the existing work has formalized via translation to Lustre~\cite{Caspi2003,Tripakis2005,Bourbouh2020}.
The basic concepts of Simulink are naturally mapped to the Lustre counterparts.
A node definition describes the relation between input and output values of a Simulink subsystem at each step, and the method in this paper encodes in basically the same way.
Additionally, translation methods for multi-rate (or multi-periodic) models and conditionally executed subsystems have examined~\cite{Tripakis2005,Bourbouh2020}.
Our method handles multi-rates in the same way but differs in that we directly encode to SMT-LIB descriptions, whereas the above methods use the \verb|when| construct of Lustre.
Zhou et al.~\cite{Zhou2012} have proposed a translation method into input-output finite automata.
They also formalized conditioning on subsystems and multi-rate models.
The result of the transformation is a flat automaton with no subsystem structure.
Bouissou et al.~\cite{Bouissou2012a} have formalized the simulator for continuous-time models implemented in Simulink, which involves numerical integration and zero crossing detection.


There are SMT-based test generation methods that translate Simulink models to constraints and perform a symbolic analysis~\cite{Ren2016,Chakrabarti2016},
which is performed to obtain a test case as a solution that satisfies constraints.
They do not consider to verify invariance properties explicitly.
%
The SmartTestGen tool~\cite{Peranandam2012a,Raviram2012} combines four approaches of test generation;
one of them considers invariance checking to guide the coverage strategy.
From their evaluation, the effectiveness of the tool for our example models is not clear.

\section{Conclusions}

We have presented an SMT-based model checking method for the invariance verification of Simulink models.
Experimental result shows that it is useful in industrial setting;
we had the competing results with the state-of-the-art tool, SLDV;
our tool handled models that could not be properly analyzed by other tools.
The verification process is comprehensible by the intermediate encoded representation and the parameters such as $k$ and $d$.
The resulting invariance proofs are reliable based on exact encoding with bit vectors.

There are several future issues such as 
improvement of the model checking algorithm and the SMT-LIB encoding method for faster verification, and
experiments on the analysis of larger and more complex models.


%

\bibliographystyle{splncs04}
\bibliography{dld}

\begin{thebibliography}{10}
\providecommand{\url}[1]{\texttt{#1}}
\providecommand{\urlprefix}{URL }
\providecommand{\doi}[1]{https://doi.org/#1}

\bibitem{Baranowski2020}
Baranowski, M., He, S., Lechner, M., Nguyen, T.S., Rakamari{\'{c}}, Z.: {An SMT
  Theory of Fixed-Point Arithmetic}. In: IJCAR. pp. 13--31. \LNCS{12166}
  (2020). \doi{10.1007/978-3-030-51074-9_2}

\bibitem{Barrett2021}
Barrett, C., Fontaine, P., Tinelli, C.: {The SMT-LIB Standard (Version 2.6)}
  (2021), \url{https://smtlib.cs.uiowa.edu/}

\bibitem{Biere2018}
Biere, A., Kr{\"{o}}ning, D.: {SAT-Based Model Checking}. In: Handbook of Model
  Checking, chap.~10, pp. 277--303. Springer (2018),
  \url{https://link.springer.com/chapter/10.1007/978-3-319-10575-8_10}

\bibitem{Bouissou2012a}
Bouissou, O., Chapoutot, A.: {An operational semantics for Simulink's
  simulation engine}. ACM SIGPLAN Notices  \textbf{47}(5),  129--138 (2012).
  \doi{10.1145/2345141.2248437}

\bibitem{Bourbouh2020}
Bourbouh, H., Garoche, P.l., Loquen, T., Noulard, E., Pagetti, C.: {CoCoSim, a
  code generation framework for control/command applications}. In: ERTS. pp.
  1--11 (2020)

\bibitem{Bradley2011}
Bradley, A.R.: {SAT-Based Model Checking Without Unrolling}. In: VMCAI. pp.
  70--87. \LNCS{6538} (2011)

\bibitem{Brain2019}
Brain, M., Schanda, F., Sun, Y.: {Building Better Bit-Blasting for
  Floating-Point Problems}. In: TACAS. pp. 79--98. \LNCS{11427}, Springer
  International Publishing (2019). \doi{10.1007/978-3-030-17462-0_5}

\bibitem{Caspi2003}
Caspi, P., Curic, A., Maignan, A., Sofronis, C., Tripakis, S., Niebert, P.:
  {From simulink to SCADE/Lustre to TTA: A layered approach for distributed
  embedded applications}. ACM SIGPLAN Notices  \textbf{38}(7),  153--162
  (2003). \doi{10.1145/780731.780754}

\bibitem{Chakrabarti2016}
Chakrabarti, S., Ramesh, S.: {{SymTest}: A framework for symbolic testing of
  embedded software}. In: ISEC. pp. 48--58 (2016).
  \doi{10.1145/2856636.2856642}

\bibitem{Champion2016b}
Champion, A., Mebsout, A., Sticksel, C., Tinelli, C.: {The KIND 2 Model
  Checker}. In: CAV. pp. 510--517. \LNCS{9780} (2016).
  \doi{10.1007/978-3-319-41540-6_29},
  \url{https://kind2-mc.github.io/kind2/}

\bibitem{Elliott2016}
Elliott, C.: {Cyber-Physical V\&V Challenges for the Evaluation of State of the
  Art Model Checkers}. In: Safe and Secure Systems and Software Symposium (S5)
  (2016)

\bibitem{Filipovikj2019}
Filipovikj, P., Rodriguez-Navas, G., Seceleanu, C.: {Bounded invariance
  checking of simulink models}. In: Proceedings of the ACM Symposium on Applied
  Computing. pp. 2168--2177 (2019). \doi{10.1145/3297280.3297493}

\bibitem{Hagen2008a}
Hagen, G., Tinelli, C.: {Scaling up the formal verification of Lustre programs
  with SMT-based techniques}. In: FMCAD. pp.~1--9. IEEE (2008).
  \doi{10.1109/FMCAD.2008.ECP.19}

\bibitem{Halbwachs1993}
Halbwachs, N., Lagnier, F., Raymond, P.: {Synchronous Observers and the
  Verification of Reactive Systems}. In: Algebraic Methodology and Software
  Technology (AMAST). pp. 83--96 (1993)

\bibitem{Kahsai2018}
Kahsai, T., Gurfinkel, A.: {Zustre} (2018),
  \url{https://github.com/coco-team/zustre}

\bibitem{Kahsai2011}
Kahsai, T., Tinelli, C.: {PKind: A parallel k-induction based model checker}.
  In: International Workshop on Parallel and Distributed Methods in
  verification (PDMC). vol.~72, pp. 55--62 (2011). \doi{10.4204/eptcs.72.6}

\bibitem{Kroening2016a}
Kroening, D., Strichman, O.: {Decision Procedures}. Springer, 2nd edn. (2016)

\bibitem{LeeSeshia2017}
Lee, E.A., Seshia, S.A.: {Introduction to Embedded Systems. A Cyber-Physical
  Systems Approach}. MIT Press, 2nd edn. (2017), \url{http://leeseshia.org}

\bibitem{Mavridou2020}
Mavridou, A., Bourbouh, H., Giannakopoulou, D., Pressburger, T., Hejase, M.,
  Garoche, P.L., Schumann, J.: {The Ten Lockheed Martin Cyber-Physical
  Challenges: Formalized, Analyzed, and Explained}. In: IEEE International
  Conference on Requirements Engineering. pp. 300--310 (2020).
  \doi{10.1109/RE48521.2020.00040}

\bibitem{Peranandam2012a}
Peranandam, P., Raviram, S., Satpathy, M., Yeolekar, A., Gadkari, A., Ramesh,
  S.: {An integrated test generation tool for enhanced coverage of
  Simulink/Stateflow models}. In: DATE. pp. 308--311. IEEE (2012).
  \doi{10.1109/date.2012.6176485}

\bibitem{Raviram2012}
Raviram, S., Peranandam, P., Satpathy, M., Ramesh, S.: {SmartTestGen+ : A Test
  Suite Booster for Enhanced Structural Coverage}. In: ICTAC. pp. 164--167.
  \LNCS{7521} (2012)

\bibitem{Ren2016}
Ren, H., Bhatt, D., Hvozdovic, J.: {Improving an Industrial Test Generation
  Tool Using SMT Solver}. In: NFM. pp. 100--106. \LNCS{9690} (2016).
  \doi{10.1007/978-3-319-40648-0}

\bibitem{Sheeran2000}
Sheeran, M., Singh, S., St{\aa}lmarck, G.: {Checking safety properties using
  induction and a SAT-solver}. In: FMCAD. pp. 127--144. \LNCS{1954} (2000),
  \url{http://link.springer.com/chapter/10.1007/3-540-40922-X_8}

\bibitem{Tripakis2005}
Tripakis, S., Sofronis, C., Caspi, P., Curic, A.: {Translating Discrete-Time
  Simulink to Lustre}. ACM Transactions on Embedded Computing Systems
  \textbf{4}(4),  779--818 (2005). \doi{10.1145/1113830.1113834}

\bibitem{Zhou2012}
Zhou, C., Kumar, R.: {Semantic translation of simulink diagrams to input/output
  extended finite automata}. Discrete Event Dynamic Systems: Theory and
  Applications  \textbf{22}(2),  223--247 (2012).
  \doi{10.1007/s10626-010-0096-1}

\end{thebibliography}

%
%

\end{document}